\begin{document}

\title{Energy Efficient Non-Cooperative  Power Control  in Small Cell Networks}

%\markboth{IEEE Transactions on Vehicular Technology, 2017} % VOL. X, NO. X, XXXXXXX 2015}
%{JIANG \MakeLowercase{\text \it{et al.}}: Energy Efficient Non-Cooperative  Power Control  in Small Cell Networks}

\author{\normalsize Yanxiang~Jiang,~\IEEEmembership{Member,~IEEE}, Ningning~Lu, Yan~Chen,~\IEEEmembership{Senior~Member,~IEEE}, Fuchun~Zheng,~\IEEEmembership{Senior~Member,~IEEE}, Mehdi~Bennis,~\IEEEmembership{Senior~Member,~IEEE}, Xiqi~Gao,~\IEEEmembership{Fellow,~IEEE}, and~Xiaohu~You,~\IEEEmembership{Fellow,~IEEE}
\thanks{Copyright (c) 2015 IEEE. Personal use of this material is permitted. However, permission to use this material for any other purposes must be obtained from the IEEE by sending a request to pubs-permissions@ieee.org.}
\thanks{Manuscript received April 27, 2016; revised December 1, 2016; accepted January 6, 2017.
This work was supported in part by the National 863 Project (2015AA01A709), the National Basic Research Program of China (973 Program 2012CB316004), the Natural Science Foundation of China (61221002, 61521061),
and the UK Engineering and Physical Sciences Research Council (EPSRC) under Grant EP/K040685/2.
The review of this paper was coordinated by Dr. W. Hamouda.
}
\thanks{Y. Jiang, N. Lu, X. Gao, and X. You are with the National Mobile Communications Research Laboratory,
Southeast University, Nanjing 210096, China (e-mail: yxjiang@seu.edu.cn, lnn0169@163.com, \{xqgao, xhyu\}@seu.edu.cn).}
\thanks{Y. Chen is with the School of Electronic Engineering, University of
Electronic Science and Technology of China, Chengdu 611051, China (e-mail: eecyan@uestc.edu.cn).}
\thanks{F. Zheng is with the National Mobile Communications Research Laboratory, Southeast University, Nanjing 210096, China, and the Department of Electronics, University of York, York, YO10 5DD, UK (e-mail: fzheng@ieee.org).}
\thanks{M. Bennis is with the Department of Communications Engineering, University of
Oulu, Oulu 90014, Finland (e-mail: bennis@ee.oulu.fi).}
\thanks{Digital Object Identifier 10.1109/TVT.2017.2673245.}
}

%\thanks{\small {Copyright (c) 2015 IEEE. Personal use of this material is permitted. However, permission to use this material for any other purposes must be obtained from the IEEE by sending a request to pubs-permissions@ieee.org. }}

% make the title area
\maketitle

\thispagestyle{fancy}

\begin{abstract}
 In this paper, energy efficient power control for small cells underlaying a macro cellular network is investigated.
%We propose to %a distributed energy efficient power control scheme for the self-organizing small cell networks and
 We formulate the power control problem in self-organizing small cell networks as a non-cooperative game, and propose a distributed energy efficient power control scheme,
 which allows the small base stations (SBSs) to take individual decisions for  attaining the Nash equilibrium (NE) with minimum information exchange.
 Specially, in the non-cooperative power control game, a non-convex optimization problem is formulated for each SBS to maximize their energy efficiency (EE). By exploiting the properties of parameter-free fractional programming and the concept of perspective function, the non-convex optimization problem for each SBS is transformed into an equivalent constrained convex optimization problem. Then, the constrained convex optimization problem is converted into an unconstrained convex optimization problem by exploiting the mixed penalty function method. The inequality constraints are eliminated by introducing the logarithmic barrier functions and the equality constraint is eliminated by introducing the quadratic penalty function.
 {We also theoretically show the existence and the uniqueness of the NE in the non-cooperative power control game.}
 Simulation results show remarkable improvements in terms of EE by using the proposed scheme.
\end{abstract}

\begin{keywords}
 Energy efficiency, power control, non-cooperative game theory, perspective function, mixed penalty function.
\end{keywords}

%\newpage
\section{Introduction}
  {
  With the rapid expansion of wireless {communications} networks, tremendous spectrum efficiency (SE) and energy efficiency (EE)
  improvement is required for  5G mobile communication systems.}
  %The 5G mobile networks are required to have tremendous spectral efficiency (SE) and energy efficiency (EE) improvement simultaneously.
   {As an expected feature of 5G systems, small}  cell networks (SCNs) \cite{Hoydis}, composed of a large number of {densely deployed} small cells with small coverage area and low transmission power, are expected to offer higher {SE and EE. }
   SCNs require self-organization, self-learning and intelligent decision making at  small cell base stations (SBSs),
     which can be randomly deployed by the operators or by the users in the hot-spot areas of city or rural locations. % as shown in Fig. \ref{fig1}.

 Due to the large number of SBSs deployed in SCNs, conventional centralized power control schemes which require cooperation among all base stations (BSs) may not be practical.
   Most of the existing works on distributed power control for SCNs aimed at improving the sum throughput or decreasing the interference \cite{Carfagna,Chandrasekhar,Hong, Huang,Ma,Nie}.
  In \cite{Carfagna}, the authors proposed a distributed scheme based on pricing mechanism to maximize the sum-rate.
  A distributed utility was proposed to alleviate the cross-tier interference at the macrocell from cochannel femtocells in \cite{Chandrasekhar}.
  In \cite{Hong}, a payoff function was formulated to improve the fairness.
  In \cite{Huang}, the authors formulated a net utility function considering both the  gain of throughput and the punishment of interference.
  In \cite{Ma}, the authors proposed a novel power control scheme to guarantee the target sigal to interference plus noise ratios (SINRs) of the macrocell users, and make as many femtocell users as possible to achieve their target SINRs.
  {In \cite{Nie}, cooperative game theory was exploited to deal with the co-tier interference of the small cells.}
  %The above research works mainly focused on improving the SE or decreasing the interference.

   With the explosive growth in data traffic, the energy consumption of wireless infrastructures and devices has increased greatly.
The {growing} energy consumption further brings about large electricity and maintenance bills for network operators,
and tremendous carbon emissions into the environment.
%On the other hand,  the improvement in  battery technology has been very slow, which leads to an exponentially
%increasing gap between the required battery capacity and the available battery capacity.
Consequently, improving the EE has become an important and urgent task.
   A large amount of works has recently been devoted to {investigating} the maximization of EE in wireless communications systems \cite{Isheden,Ng,Helmy,Jiang}.
   In general, the EE maximization problems are non-convex optimization problems,
   which can be transformed into  equivalent convex optimization problems by using fractional programming \cite{Dinkelbach}.
   In \cite{Isheden}, %the authors discussed three kinds of fractional programming in detail, including two different convex formulations and one duality based approach.
   the authors unified various approaches for the EE optimization problem for a typical  scenario in wireless {communications} systems,
   and discussed three {types} of fractional programming solutions in detail, including parametric convex program, parameter-free convex program and dual program.
   In \cite{Ng}, %the resource allocation for energy-efficient secure communication was investigated,
   %and the considered non-convex optimization problem was transformed into a convex optimization problem by exploiting the properties of fractional programming.
   the resource allocation for energy efficient secure communication in an orthogonal frequency-division multiple-access (OFDMA) downlink network was investigated,
   and the considered non-convex optimization problem was transformed into a convex optimization problem by exploiting the properties of parametric fractional programming.
   In \cite{Helmy}, %the authors proposed an energy efficient power control scheme with a delay-outage probability constraint, and derived the global optimum solution using fractional programming.
   the authors proposed an energy efficient power control scheme for a multi-carrier link over a frequency selective fading channel with a delay-outage probability constraint,
   and derived the global optimum solution using parameter-free fractional programming.
   In \cite{Jiang}, by using fractional programming and exterior penalty function, an efficient iterative joint resource allocation and power control scheme was proposed to maximize the
   EE of device-to-device (D2D) communications.
   However, all the above research works on EE maximization have mainly focused on non-SCNs {considering} only one BS, which may not be applicable in SCNs.
   { Furthermore, the sheer number of densely deployed SBSs makes distributed power control more desirable.}
{For the existing research works on power control for SCNs, there are only a few studies concerning the optimization of EE.}
{In \cite{Haddad}, the authors introduced a Stackelberg game-theoretic framework for heterogeneous networks which enables both the small cells and the macro cells to strategically decide on their downlink
power control policies. In \cite{Yang}, a bargaining cooperative game framework for interference-aware power coordination was proposed.}

Motivated by the aforementioned discussions,  we propose a non-cooperative game theory based power control scheme for SCNs,
where each SBS  maximizes its own EE by performing power control independently and distributively.
By exploiting the properties of parameter-free fractional programming and the concept of perspective function, the non-convex optimization problem for each SBS player is transformed into an equivalent constrained convex optimization problem.
Then, this constrained convex optimization problem is further converted into an unconstrained convex optimization problem by exploiting the mixed penalty function method.
The inequality constraints are eliminated by introducing the logarithmic barrier functions and the equality constraint is eliminated by introducing a quadratic penalty function.
The existence and uniqueness of the Nash equilibrium (NE) in the non-cooperative power control game are further proved theoretically.

%\pubidadjcol
\section{System Model}% and Problem Formulation}

   We consider a two-tier downlink cellular network consisting of one macrocell and \(K\) small cells as illustrated in Fig. \ref{fig1}.
\begin{figure}[!b]
\centering
\includegraphics[width=0.45\textwidth]{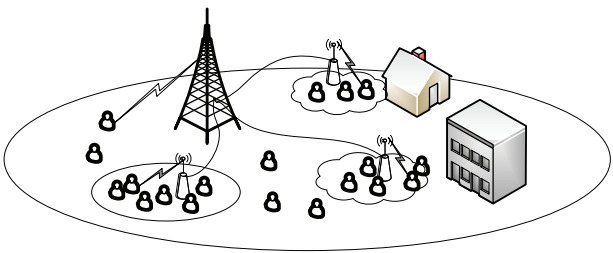}
\captionstyle{mystyle2}
\caption{System model of SCNs.}
\label{fig1}
\end{figure}
Let $\mathcal{K}=\{1,2,\cdots,K\}$ denote the set of the SBSs.
Let $\mathcal{N}_0={\rm{\{ 1,2,}} \cdots, {{N_0\} }}$ and $\mathcal{N}_k=\{1,2,\cdots,N_k\}$  denote the set of the macrocell user equipments (MUEs) and the set of  the small cell user equipemnts (SUEs) served by the $k$th SBS, respectively.
Let $\mathcal{N}=\{1,2,\cdots,N\}$ denote the set of the resource blocks (RBs).
Here, one RB refers to one time-frequency resource block which includes one time slot and 12 subcarriers\cite{Jiang}.
For efficient spectrum sharing, we assume that both the marcocell and small cells can utilize all the available RBs.
We also assume that the resource allocation of all the MUEs and SUEs has already been carried out.

Let \(P_k^i\) denote the transmit power of  the $k$th SBS to its corresponding SUE over the $i$th RB,
\(P_0^i\) the transmit power of the MBS to its corresponding MUE over the $i$th RB,
and  \(P_l^i\)  the transmit power of the $l$th SBS to its corresponding SUE over the $i$th RB.
Let \(\boldsymbol{p}_{  k}=[P_k^1,P_k^2, \cdots, P_k^i, \cdots, P_k^{N}]^T\).
Let {\(R_k(\boldsymbol{p}_{  k})\) }denote the transmission rate of the $k$th SBS.
Then, it can be expressed as,
\begin{multline}\label{Rate}
{R_k}(\boldsymbol{p}_{  k})  \\
= W \sum\limits_{i = 1}^{N}{\log_2 \left(1 + \frac{{H_{k,k}^iP_k^i}}{{H_{0,k}^iP_0^i{\rm{ + }} {\sum\limits_{l \ne k, l = 1}^{K}} {H_{l,k}^iP_l^i} + {N_0} }} \right)},
\end{multline}
%Let $\gamma _k^i$ denote the downlink signal-to-interference-plus-noise ratio (SINR) of the considered SUE over the $i$th RB associated with the $k$th SBS.
%
%\begin{equation}\begin{split}\label{SINR}
%\gamma _k^i = \frac{{H_{k,k}^iP_k^i}}{{H_{0,k}^iP_0^i{\rm{ + }} {\sum\limits_{l = 1, l \ne k}^{K}} {H_{l,k}^iP_l^i} + {N_0} }}
%\end{split}\end{equation}
where $W$ denotes the bandwidth of one RB,
\(H_{k,k}^i\)  the channel gain  from the $k$th SBS to its corresponding SUE over the $i$th RB,
\(H_{0,k}^i\)  the interference channel gain from the MBS to the corresponding SUE over the $i$th RB associated with the $k$th SBS,
\(H_{l,k}^i \)  the interference channel gain from the $l$th SBS to the corresponding SUE over the $i$th RB associated with the $k$th SBS,
and \({N_0}\)  the noise power.

%In order to achieve the objective of green communications, we use EE of each SBS as the optimization objective.
Let \(\text{EE}_k\) denote the EE of the $k$th SBS. Then, it can be expressed as,
\begin{equation}\begin{split}\label{EE}
{\text{EE}_k}= \frac{{{R_k}}}{{P_k^c{\rm{ + }} \frac{1}{\sigma} \sum\limits_{i = 1}^{N} {P_k^i} }},
\end{split}\end{equation}
where $P_k^c$ denotes the circuit power consumption of the $k$th SBS,
and it is a power offset that is independent of the radiated power, but derived from signal processing, and the related circuit power, etc\cite{Jiang}.
$\sigma$ denotes the inefficiency of the power amplifier, and $0<\sigma\leq 1$.
%For instance, if $\sigma=0.2$, for every 10 Watts of radiated power in the RF, 50 Watts are consumed in the power amplifier and the power efficiency is $\sigma=0.2=20\%$.
Correspondingly, the EE of the entire system can be expressed as \cite{Jiang}
\begin{equation}\begin{split}\label{EEs}
{\text{EE}_s}= \frac{{\sum\limits_{k = 1}^{K}{R_k}}}{\sum\limits_{k = 1}^{K}\left({P_k^c{\rm{ + }}   \frac{1}{\sigma} \sum\limits_{i = 1}^{N} {P_k^i} }\right)}.
\end{split}\end{equation}

{It should be pointed out here that we use the heterogeneous feature of SCNs and consider multiple small cells and multiple users in the system model.
Both the cross-tier interference from the marcrocell to the considered small cell and the co-tier interference from the other small cells to the considered small cell are taken into account.
}

%Let \({P_{k}}\) denote the total power consumption of the $k$th SBS, we have \({P_{k}} = P_k^c{\rm{ + }}\sum\limits_{i = 1}^{N} {P_k^i}\), where \(P_k^c\) is the circuit power consumption of the $k$th SBS, and it is independent of the transmission power.

\section{Proposed Energy Efficient Non-cooperative Power Control Scheme}

In this section, we propose a distributed energy efficient power control scheme for SCNs based on non-cooperative game theory \cite{Nash}.
In the formulated non-cooperative power control game, each SBS is allowed to take individual decisions for attaining the NE with minimum information exchange.
The fractional objective function in (\ref{EEs}) is a {non-concave} function which is hard to solve directly.
{Just as} we pointed out in \cite{Jiang}, a brute force approach is generally required for obtaining a global optimal solution.
However, such a method has an exponential complexity with respect to the number of RBs and the number of cells.
Therefore, we propose an {efficient} method to solve this challenging problem.

%The SCNs require self-configuration, -optimization, and -healing mechanisms, so centralized power control schemes are not suitable for the SCNs. In this section, we propose a distributed energy efficient power control scheme for the self-organizing SCNs.

\subsection{Problem Formulation}

%To meet the self-organization requirement of the SCNs, we consider distributed power control based on non-cooperative game \cite{Nash}.

Using microeconomic concepts, we assume that all the considered SBSs participate in a \(K\)-player non-cooperative power control game \(G = [\mathcal{K} ,\{ {{\boldsymbol{p}}_k}\} ,\{ {u_k}( \cdot )\} ]\),
where \(\{\boldsymbol{p}_{  k}\}\) denotes  the set of the transmit power vectors of the $k$th SBS,
and \({u_k}(\cdot)\) denotes the utility function of the $k$th SBS to be maximized.
Let \({\boldsymbol{p}}_{  -k}=[{\boldsymbol{p}}^T_{  1},{\boldsymbol{p}}^T_{  2},...,{\boldsymbol{p}}^T_{  k-1},{\boldsymbol{p}}^T_{  k+1},...,{\boldsymbol{p}}^T_{  K}]^T\) denote the transmit power vector of the $K-1$ other SBSs.
Then, for all the considered SBSs, the non-cooperative power control game can be formulated as follows,
 \begin{equation}\label{Game model}
\mathop {\max }\limits_{{\boldsymbol{p}}_k^{} \in C_k} {u_k}\left( {{\boldsymbol{p}}_k^{},{\boldsymbol{p}}_{ - k}^{}} \right),\text{ for each SBS in }\mathcal{K},
\end{equation}
where \(C_k\) is the feasible region of \({\boldsymbol{p}}_{ k}\). In our case,
{the utility function that we consider is EE, i.e.,}
${u_k}\left( {{\boldsymbol{p}}_k^{},{\boldsymbol{p}}_{ - k}^{}} \right) = \text{EE} _ {k}$.
Let \(P_t\) denote the maximum power consumption of each SBS, and \(R_t\) the minimum rate requirement of each SBS.
By considering the transmit power constraint and the transmission rate constraint, the above optimization problem for the $k$th SBS can be expressed in the following equivalent form,
\begin{equation}\begin{split}\label{Nonconvex problem}
            &\mathop {\max }\limits_{{\boldsymbol{p}}_k} {u_k}\left( {{\boldsymbol{p}}_k^{},{\boldsymbol{p}}_{ - k}^{}} \right)\\
\text{s.t.    }
            &{C_{k,1}}:{P^i _{k}}\ge 0, i = 1,2,\cdots,N\\
            &{C_{k,2}}:\sum\limits_{{i} = 1}^{N} {P_{k}^i}  \le {P_{t}}\\
            & {C_{k,3}}: {R_k(\boldsymbol{p}_k)}  \ge {R_{t}}.
\end{split}\end{equation}
{where \({C_{k,1}}\) and \({C_{k,2}}\) are the transmission power constraints for the  $k$th SBS, and \({C_{k,3}}\) guarantees the target rate requirement of the $k$th SBS.}
{In the following, where no confusion is caused, we also use $C_{k,1}, C_{k,2},$ and  $C_{k,3}$ to denote the sets of transmit powers which meet the corresponding constraints.
Therefore,}  \({C_{k,1}} \cap {C_{k,2}} \cap C{}_{k,3}=C_k\).
Note that, as in \cite{Jiang}, all the subcarriers within one RB are characterized by the same channel gains in our formulated
EE optimization problem, and hence, no single-user diversity is exploited.
However, the assumption of constant channel gain in the allocated subcarriers in one RB makes the power control
problem more practical, despite it representing a suboptimal solution.

In the above non-cooperative power control game,  each SBS optimizes its individual utility that depends on the transmit powers of the $K-1$ other SBSs.
It is necessary to calculate the equilibrium point wherein each SBS achieves its maximum utility
{ conditioned on the transmit powers of the  $K-1$ other SBSs.}
Such an operating point in the optimization problem (\ref{Game model}) is called an NE \cite{Nash}.
Let \( {{\boldsymbol{p}}^{\rm{*}}_k} \) denote the $k$th SBS's best response to the strategy \( {{\boldsymbol{p}}^{\rm{*}}_{-k}} \) %=\{ {{\boldsymbol{p}}^{\rm{*}}_1}, {{\boldsymbol{p}}^{\rm{*}}_2}, \cdots, {{\boldsymbol{p}}^{\rm{*}}_{k - 1}}, {{\boldsymbol{p}}^{\rm{*}}_{k + 1}}, \cdots, {{\boldsymbol{p}}^{\rm{*}}_K}\} \)
specified for the $K-1$ other SBSs at the NE.
Then, \( {{\boldsymbol{p}}^{\rm{*}}_k} \) can be expressed as follows,
\begin{equation}\begin{split}\label{Nash point}
{\boldsymbol{p}}_k^* = \arg \mathop {\max }\limits_{{\boldsymbol{p}}_k^{} \in C_k} {u_k}\left( {{\boldsymbol{p}}_k^{},{\boldsymbol{p}}_{ - k}^*} \right).
\end{split}\end{equation}
It means that no SBS can obtain a unilateral profit by deviating from the NE, i.e.,
\begin{multline}\label{3-01}
{u_k}\left( {{\boldsymbol{p}}^{\rm{*}}_k,{\boldsymbol{p}}^{\rm{*}}_{ - k}} \right) \ge {u_k}\left( {{\boldsymbol{p}}_k^{},{\boldsymbol{p}}^{\rm{*}}_{ - k}} \right),\\
 \text{for every feasible strategy}\  {{\boldsymbol{p}}_k} \ \text{in} \  C_k.
\end{multline}
Accordingly, the goal of our considered optimization problem is to find the NE in the non-cooperative power control game \(G\).
%To improve the EE of the SCNs, we model the utility function for each SBS in the non-cooperative power control game as its EE. Let \(EE_k\) denote the EE of the $k$th SBS, and it can be defined as follows.
%\begin{equation}\begin{split}\label{3-01}
%{EE_k}= \frac{{{R_k}}}{{P_k^c{\rm{ + }}\sum\limits_{i = 1}^{N} {P_k^i} }}
%\end{split}\end{equation}
%   Given \({\boldsymbol{p}}_{  -k}\), the utility function of  the $k$th SBS can be modeled as\({u_k}\left( {{\boldsymbol{p}}_k^{},{\boldsymbol{p}}_{ - k}^{}} \right){\rm{ = }} {EE_k}\).

\subsection{Iterative Algorithm for EE Maximization}

In the following, we {propose} an iterative power tuning strategy to reach the NE based on (\ref{Nash point}) in the considered non-cooperative power control game.
Let $n$ denote the number of iterations and $\epsilon$ denote the convergence threshold.
%Denote \({\boldsymbol{p}}(n)=[{\boldsymbol{p}}^T_{  1}(n),{\boldsymbol{p}}^T_{  2}(n),...,{\boldsymbol{p}}^T_{  k}(n),...,{\boldsymbol{p}}^T_{  K}(n)]^T\) as the transmission power strategy of all the SBSs and
Let ${\boldsymbol{p}}_{k}(n)$ and ${\boldsymbol{p}}_{-k}(n)$ denote the transmission vectors of the $k$th SBS and the $K-1$ other SBSs for the $n$th iteration, respectively.
%=[{\boldsymbol{p}}^T_{  1}(n),{\boldsymbol{p}}^T_{  2}(n),...,{\bf{p}}^T_{  k-1}(n),{\boldsymbol{p}}^T_{  k+1}(n),...,{\boldsymbol{p}}^T_{  K}(n)]^T\) as the transmission power strategy of all the SBSs except the $k$th SBS for the $n$th iteration where \({\boldsymbol{p}}_{k}(n)=[P_k^1(n),P_k^2(n), \cdots ,P_k^{N}(n)]^T\).
Then, the iterative algorithm based on the non-cooperative power control game can be summarized in Algorithm 1.
In each iteration, the transmission power strategy \({\boldsymbol{p}}_{k}(n+1)\) of  the $k$th SBS is a best response to the transmission power strategy \({\boldsymbol{p}}_{-k}(n)\) of the $K-1$ other SBSs. %for the $n$th
The iterative algorithm converges to  the only solution if and only if there exists one unique NE in the non-cooperative power control game,
and the NE can be reached when each SBS's transmission power strategy is sufficiently close to that in the previous iteration.

\begin{algorithm}[!h]
%\doublespacing
\caption{The iterative algorithm based on the non-cooperative power control game}
%\doublespacing
\begin{itemize}
\item Step 1: Initialization: {\(n = 1\),  $\epsilon $, \({\boldsymbol{p}}_k(1)\), and \({\boldsymbol{p}}_{-k}(1)\).}
\item Step 2:  For each SBS (i.e., $k=1,2, \cdots, K$), calculate
 ${{\boldsymbol{p}}_k}(n + 1) = \arg \mathop {\max }\limits_{{\boldsymbol{p}}_k^{} \in C_k} {u_k}\left[ {{\boldsymbol{p}}_k^{},{\boldsymbol{p}}_{ - k}}(n) \right], \   k \in \mathcal{K}$.
\item Step 3: {Set \(\beta=\sum\nolimits_{k = 1}^{K}|{u_k}\left[ {\boldsymbol{p}}_k(n+1), {\boldsymbol{p}}_{-k}(n+1) \right]-{u_k}\left[ {\boldsymbol{p}}_k(n),  {\boldsymbol{p}}_{-k}(n) \right] |\).
              If $\beta$ $ \geq $ $\epsilon$, set \(n=n+1\).}
\item Step 4: Repeat the steps $2\thicksim3$ until $\beta$ $<$ $\epsilon$.
\end{itemize}
\end{algorithm}

\subsection{Problem Equivalence}
%\vspace*{-8pt}

The optimization problem in (\ref{Nonconvex problem}) involves a fractional objective function which  is non-convex,
%A brute-force approach is generally required to obtain a global optimal solution,
and it requires great computational burden even for a small sized system.
Therefore, we resort to a more computationally efficient scheme below to solve this challenging problem.
%we can transform the optimization problem (\ref{Nonconvex problem}) into an equivalent optimization problem which can be solved easily.

\subsubsection{{Convex Transformation}}

Exploiting the properties of  parameter-free fractional programming \cite{Pfree} and the concept of perspective function \cite{Boyd},
we propose to transform the original non-convex optimization problem in (\ref{Nonconvex problem}) into an equivalent convex optimization problem.
Apply the variable transformation
\begin{align*}
y_k^0 &=  \frac{1}{{P_k^c{\rm{ + }}\sum\nolimits_{i = 1}^{N} {P_k^i} }}, \\
y_k^i &= \frac{P_k^i}{{P_k^c{\rm{ + }}\sum\nolimits_{i = 1}^{N} {P_k^i} }}. %(i=1,2,...,N)\),
\end{align*}
Define
\begin{align*}
{{\boldsymbol{y}}_k^{'}} &= \left[ \frac{y_k^1}{ y_k^0}, \frac{y_k^2}{y_k^0}, \cdots, \frac{y_k^i}{y_k^0}, \cdots, \frac{y_k^{N}}{y_k^0} \right]^T, \\
{{\boldsymbol{y}}_k} &= \left[ {y_k^0,y_k^1, \cdots, y_k^i, \cdots, y_k^{N}} \right]^T.
\end{align*}
Let \(R_k^{'}({{\boldsymbol{y}}_k})=R_k({{\boldsymbol{y}}_k}^{'})\), and define $\zeta({\boldsymbol{y}}_k){\rm} = {y_k^0}R_k^{'}({\boldsymbol{y}}_k)$.
Then, the following {optimization problem} can be formulated,
\begin{equation}\begin{split}\label{Convex problem}
&\mathop {\max }\limits_{{\boldsymbol{y}}_k} \text{ }
            \zeta({\boldsymbol{y}}_k){\rm} \\
\text{s.t.    }
            & {S^i_{k,1}}:y_{k}^i \ge 0, \ i = 0,1,2,...,N\\
            & {S_{k,2}}:{y_k^0}{P_{t}} {\rm{ - }}\sum\limits_{{i} = 1}^{N} {y_{k}^i}  \ge 0\\
            & {S_{k,3}}:{R_k^{'}}{\rm{ - }}{R_{t}} \ge 0\\
            & {S_{k,4}}:P_k^c{y_k^0}{\rm{ + }}\sum\limits_{i = 1}^{N} {y_k^i}  - 1 = 0.
\end{split}\end{equation}

 %{
 We are now ready to present the following Theorem{, which verifies the equivalence of the optimization problems theoretically.}
 \newtheorem {theorem}{Theorem}
  \begin{theorem}
  The optimization problem in (\ref{Convex problem}) is a convex optimization problem and equivalent to the optimization problem in (\ref{Nonconvex problem}).
  \end{theorem}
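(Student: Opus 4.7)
My plan is to establish two things for problem (\ref{Convex problem}): that it is related to (\ref{Nonconvex problem}) by a bijective change of variables which carries the objective into $\zeta$, and that the transformed objective is concave while its feasible region is convex. I would handle these as two separate steps.

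For the equivalence, the key observation is that on the feasible set of (\ref{Nonconvex problem}) the denominator $P_k^c+\sum_{i=1}^{N} P_k^i$ is strictly positive (since $P_k^c>0$), so the substitution $y_k^0 = 1/(P_k^c+\sum_i P_k^i)$, $y_k^i = y_k^0 P_k^i$ is well defined, with inverse $P_k^i = y_k^i/y_k^0$ whenever $y_k^0>0$. I would then check each constraint in turn: the identity $P_k^c y_k^0 + \sum_i y_k^i = (P_k^c + \sum_i P_k^i)\,y_k^0 = 1$ gives $S_{k,4}$ automatically; $y_k^i\ge 0 \Leftrightarrow P_k^i\ge 0$ under $y_k^0>0$; $y_k^0 P_t - \sum_i y_k^i = y_k^0(P_t - \sum_i P_k^i) \ge 0 \Leftrightarrow \sum_i P_k^i \le P_t$; and $\boldsymbol{y}_k' = (P_k^1,\dots,P_k^N)^T = \boldsymbol{p}_k$, so $R_k'(\boldsymbol{y}_k)=R_k(\boldsymbol{p}_k)$ and $S_{k,3}\Leftrightarrow C_{k,3}$. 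Finally $\zeta(\boldsymbol{y}_k) = y_k^0 R_k'(\boldsymbol{y}_k) = R_k(\boldsymbol{p}_k)/(P_k^c+\sum_i P_k^i)$, matching $u_k$ up to the constant $\sigma$ (which only rescales the power term), so the two problems share the same optimal value and their optimizers are in one-to-one correspondence.

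For convexity, I would first observe that with $\boldsymbol{p}_{-k}$ fixed, the rate in (\ref{Rate}) takes the form $R_k(\boldsymbol{p}_k) = W\sum_i \log_2(1+\alpha_i P_k^i)$ with strictly positive constants $\alpha_i$ depending only on $\boldsymbol{p}_{-k}$, and is therefore concave in $\boldsymbol{p}_k$. Rewriting $\zeta(\boldsymbol{y}_k) = W\sum_i y_k^0 \log_2\!\bigl(1+\alpha_i y_k^i/y_k^0\bigr)$ identifies $\zeta$ as a sum of perspective transforms of concave scalar maps, and since the perspective operation preserves concavity on $\{y_k^0>0\}$, $\zeta$ is concave on the feasible region. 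Constraints $S_{k,1}$ and $S_{k,2}$ are affine inequalities and $S_{k,4}$ is an affine equality; for the rate-floor constraint $S_{k,3}$, multiplying through by the strictly positive $y_k^0$ yields the equivalent $\zeta(\boldsymbol{y}_k) - R_t\,y_k^0 \ge 0$, whose left side is concave. Hence all four constraints define convex sets, and (\ref{Convex problem}) is a concave maximization over a convex feasible set, i.e., a convex program.

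The main obstacle is this convexity step, because neither $\zeta$ nor $R_k'$ is a standard concave function of $\boldsymbol{y}_k$ in its obvious algebraic form; both become concave only after one recognizes the perspective structure and, for $S_{k,3}$, multiplies through by $y_k^0>0$. A minor technical point is to argue that $y_k^0>0$ holds implicitly on the feasible set of (\ref{Convex problem}): this follows from $S_{k,4}$ combined with $S_{k,3}$, since $R_t>0$ together with a finite rate $R_k'$ rules out $y_k^0=0$.
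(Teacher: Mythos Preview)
Your proof is correct but follows a different route from the paper's. The paper does not verify convexity of (\ref{Convex problem}) directly; instead it checks that the \emph{original} problem (\ref{Nonconvex problem}) has concave numerator $R_k$, affine denominator, and convex feasible set $C_k=C_{k,1}\cap C_{k,2}\cap C_{k,3}$ (by computing the Hessians of $f_k$, $h_k$, and the constraint functions $l_{k,1}^i,l_{k,2},l_{k,3}$), and then cites Schaible's parameter-free fractional programming result \cite{Pfree} as a black box to conclude that the Charnes--Cooper-type transformation yields an equivalent convex program. You instead work entirely in the transformed variables: you exhibit the bijection between feasible sets and objectives explicitly, and you establish concavity of $\zeta$ and convexity of each transformed constraint via the perspective operation. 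Your handling of $S_{k,3}$---noting that $R_k'$ is \emph{not} concave in $\boldsymbol{y}_k$ and rewriting the constraint as the super-level set of the concave function $\zeta(\boldsymbol{y}_k)-R_t\,y_k^0$ on $\{y_k^0>0\}$---is precisely the subtlety the paper delegates to the citation of \cite{Pfree}. The trade-off is that the paper's argument is shorter but leans on an external theorem, while yours is self-contained and makes the perspective structure explicit. One minor correction: the discrepancy you flag between $u_k$ and $\zeta$ (the missing $1/\sigma$ in the denominator of the variable substitution) is not a scalar multiple of the whole objective unless $P_k^c=0$, so ``up to the constant $\sigma$'' is slightly imprecise; this looseness originates in the paper's own change of variables, however, and does not affect the logic of your convexity or equivalence arguments.
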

  \begin{proof}
  From (\ref{Rate}), (\ref{EE}), and (\ref{Nonconvex problem}), let
\begin{align*}
{f_{k}}\left( {{\boldsymbol{p}}_k^{}} \right)& = \frac{1}{W} {R_{k}}\left( {{\boldsymbol{p}}_k^{}} \right)  \\
&= \sum\limits_{i = 1}^{N} {{{\log }_2}\left( {1 + \frac{{H_{k,k}^iP_k^i}}{{{N_0}{\rm{ + }}H_{0,k}^iP_0^i{\rm{ + }}{\sum\limits_{l \ne k,l = 1}^{K}} {H_{l,k}^iP_l^i} }}} \right)}, \\
{h_k}\left( {{\boldsymbol{p}}_k^{}} \right)&={{P_k^c{\rm{ + }} \frac{1}{\sigma} \sum\limits_{i = 1}^{N} {P_k^i} }}.
\end{align*}
%\[{h_k}\left( {{\boldsymbol{p}}_k^{}} \right)={{P_k^c{\rm{ + }} \frac{1}{\sigma} \sum\limits_{i = 1}^{N} {P_k^i} }}.\]
{Taking} the second order derivative of ${f_{k}}\left( {{\boldsymbol{p}}_k^{}} \right)$ and ${h_k}\left( {{\boldsymbol{p}}_k^{}} \right)$ with respect to \(P_{k}^i \), we have
\begin{align*}
\frac{{{\partial ^2}{f_{k}}({{\boldsymbol{p}}_k})}}{{\partial {{(P_{{k}}^i)}^2}}} &= -\frac{{ {{(H_{kk}^i)}^2}}}{{{{\left( {{N_0} + H_{kk}^iP_k^i + H_{0k}^iP_0^i{\rm{ + }}{{{\sum\limits_{l \ne k,l = 1}^{K}} {H_{l,k}^iP_l^i} }} } \right)}^2}}} \\
&< 0, \\
\frac{{{\partial ^2}{h_k}({{\boldsymbol{p}}_k})}}{{\partial {{(P_{{k}}^i)}^2}}} &= 0.
\end{align*}
{We can also readily establish that $\frac{{{\partial ^2}{f_{k}}({{\boldsymbol{p}}_k})}}{{\partial {{P_{{k}}^i} \partial {P_{{k}}^j}}}} = \frac{{{\partial ^2}{h_{k}}({{\boldsymbol{p}}_k})}}{{\partial {{P_{{k}}^i} \partial {P_{{k}}^j}}}} = 0, \ \forall i \ne j.$
Therefore, the Hessian matrix of $f_{k}({{\boldsymbol{p}}_k})$ is negative definite and  \({f_{k}}\left( {{\boldsymbol{p}}_k^{}} \right) \) is concave,
and the Hessian matrix of $h_{k}({{\boldsymbol{p}}_k})$ is positive definite and \({h_k}\left( {{\boldsymbol{p}}_k^{}} \right) \)  is convex.}

From (\ref{Nonconvex problem}), let
\begin{align*}
{l_{k,1}^i}({\boldsymbol{p}}_k)&=-P_{k}^i,\  i = 1,2,...,{N}, \\
{l_{k,2}}({\boldsymbol{p}}_k)&=\sum\limits_{{{i}} = 1}^{N} {P_{k}^i}, \\
{l_{k,3}}({\boldsymbol{p}}_k)& = -\frac{1}{W} {R_{k}}\left( {{\boldsymbol{p}}_k^{}} \right) = -{f_{k}}\left( {{\boldsymbol{p}}_k^{}} \right).
\end{align*}
 {Taking} the second order derivative of \({l_{k,1}^i}({\boldsymbol{p}}_k) \), \({l_{k,2}}({\boldsymbol{p}}_k)\) and \({l_{k,3}}({\boldsymbol{p}}_k)\) with respect to \(P_{k}^i \), we have
\begin{align*}
\frac{{{\partial ^2}{l_{k,1}^i}({\boldsymbol{p}}_k)}}{{\partial {{(P_{k}^i)}^2}}} &= \frac{{{\partial ^2}{l_{k,2}}({\boldsymbol{p}}_k)}}{{\partial {{(P_{k}^i)}^2}}} = 0, \\
\frac{{{\partial ^2}{l_{k,3}}({{\boldsymbol{p}}_k})}}{{\partial {{(P_{k}^i)}^2}}} &= \frac{{ {{\left(H_{kk}^i \right)}^2}}}{{{{\left( {{N_0} + H_{kk}^iP_k^i + H_{0k}^iP_0^i{\rm{ + }}{{{\sum\limits_{l \ne k,l = 1}^{K}} {H_{l,k}^iP_l^i} }} } \right)}^2}}} \\
& > 0.
\end{align*}
{We can also readily establish that $\frac{{{\partial ^2}{l_{k,1}^i}({{\boldsymbol{p}}_k})}}{{\partial {{P_{{k}}^i} \partial {P_{{k}}^j}}}} = \frac{{{\partial ^2}{l_{k,2}}({{\boldsymbol{p}}_k})}}{{\partial {{P_{{k}}^i} \partial {P_{{k}}^j}}}} = \frac{{{\partial ^2}{l_{k,3}}({{\boldsymbol{p}}_k})}}{{\partial {{P_{{k}}^i} \partial {P_{{k}}^j}}}} = 0, \ \forall i \ne j.$
Therefore, the Hessian matrices of \({l_{k,1}^i}({\boldsymbol{p}}_k)\), \({l_{k,2}}({\boldsymbol{p}}_k)\) and \({l_{k,3}}({\boldsymbol{p}}_k)\) are all positive definite,
and \({l_{k,1}^i}({\boldsymbol{p}}_k)\), \({l_{k,2}}({\boldsymbol{p}}_k)\) and \({l_{k,3}}({\boldsymbol{p}}_k)\) are all convex.}
Correspondingly, the  sublevel sets \(C_{k,1}\),\(C_{k,2}\), \(C_{k,3}\) of \({l_{k,1}^i}({\boldsymbol{p}}_k)\), \({l_{k,2}}({\boldsymbol{p}}_k)\) and \({l_{k,3}}({\boldsymbol{p}}_k)\) are convex \cite{Boyd},
and the set \(C_k= {C_{k,1}} \cap {C_{k,2}} \cap {C_{k,3}}\) is convex.

Consequently, according to \cite{Pfree}, the optimization problem in (\ref{Convex problem}) is a convex optimization problem and equivalent to the optimization problem in (\ref{Nonconvex problem}).
  \end{proof}

\subsubsection{Elimination of Constraints}

%From (\ref{Convex problem}), we obtain an equivalent constrained convex optimization problem.
{In general, the exterior penalty function method \cite{Jiang, Wang} can be applied to solve convex optimization problems with equality constraints and inequality constraints.
However, the solution obtained  may not satisfy the constraints.
On the other hand, the interior penalty function method \cite{Wright} can only solve convex optimization problems with inequality constraints,
but the solution obtained can {always} satisfy the constraints.}
We can see from (\ref{Convex problem}) that the equivalent optimization problem includes both the inequality constraints \({S^i_{k,1}}\ (i=0,1,2,...,N)\), \(S_{k,2}\), \(S_{k,3}\) and  the equality constraint \(S_{k,4}\).
Therefore, we propose to use the mixed penalty function method to transform the constrained convex optimization problem in (\ref{Convex problem}) into an unconstrained one.
The inequality constraints  \({S^i_{k,1}} (i=1,2,...,N)\), \(S_{k,2}\) and \(S_{k,3}\) are eliminated by introducing a logarithmic barrier function based on the interior penalty function  method,
and the equality constraint \(S_{k,4}\) is eliminated by introducing a quadratic penalty function based on the exterior penalty function method.
{Correspondingly, both the disadvantage of the exterior penalty function method and that of the interior penalty function method can be avoided.}

Let
\begin{align*}
\varphi_1({\boldsymbol{y}}_k) &= \sum\limits_{i = 0}^{N} {\ln  y_k^i},\\
\varphi_2({\boldsymbol{y}}_k) &= \ln \left({y_k^0}{P_{t}}  - \sum\limits_{{\rm{i}} = 1}^N {y_{\rm{k}}^i} \right), \\
\varphi_3({\boldsymbol{y}}_k) &= \ln \left({R_{k}^{'}} - {R_t}\right).
\end{align*}
Define the logarithmic barrier function with respect to the inequality constraints as follows,
\begin{equation}\begin{split}\label{3-01}
{\phi _{\text{ie}}}({\boldsymbol{y}}_k)
        &=  - \varphi_1({\boldsymbol{y}}_k)   - \varphi_2({\boldsymbol{y}}_k) - \varphi_3({\boldsymbol{y}}_k).
\end{split}\end{equation}
According to \cite{Boyd} and \cite{ Wright}, it can  be readily  established that $\varphi_1({\boldsymbol{y}}_k)$, $\varphi_2({\boldsymbol{y}}_k)$ and $\varphi_3({\boldsymbol{y}}_k)$ are concave.
Correspondingly, \({\phi _{\text{ie}}}({\boldsymbol{y}}_k)\) is convex.
Define the quadratic penalty function with respect to the quality constraint as follows,
\begin{equation}\begin{split}\label{3-01}
{\phi _{\text{e}}}({\boldsymbol{y}}_k) = {\left(P_k^c{y_k^0}{\rm{ + }}\sum\limits_{i = 1}^N {y_k^i}  - 1 \right)^2}.
\end{split}\end{equation}
Obviously, \({\phi _{\text{e}}}({\boldsymbol{y}}_k)\) is convex.

Let ${\mu}_{\text{ie}}$ and ${\mu}_{\text{e}}$ denote the positive penalty factors.
{${\mu}_{\text{ie}}$ should be set as small as possible and ${\mu}_{\text{e}}$ should be set as large as possible. }
Define
\begin{equation}\begin{split}\label{Unconstrained function}
{\psi }({\boldsymbol{y}}_k) =  - \zeta ({\boldsymbol{y}}_k)+ {{\mu}_{\text{ie}}} {\phi _{\text{ie}}}({\boldsymbol{y}}_k){\rm{ + }}{{{\mu}_{\text{e}}} }{\phi _{\text{e}}}({\boldsymbol{y}}_k).
\end{split}\end{equation}
Then, the constrained convex optimization problem in (\ref{Convex problem}) can be transformed into an equivalent noncontrained convex optimization problem by using the mixed penalty method as follows,
\begin{equation}\begin{split}\label{Unconstrained convex problem}
\mathop {\min }\limits_{{\boldsymbol{y}}_k} {\psi }({\boldsymbol{y}}_k).
\end{split}\end{equation}

{
Correspondingly, we have the following Theorem.
  \begin{theorem}
  The non-constrained convex optimization problem in (\ref{Unconstrained convex problem}) is equivalent to the constrained convex optimization problem in (\ref{Convex problem}).
  \end{theorem}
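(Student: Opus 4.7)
The plan is to establish the equivalence via the classical convergence results for the mixed penalty function method, adapted to our convex setting. First, I would verify that the new objective $\psi({\boldsymbol{y}}_k)$ is convex, so that the unconstrained minimization is well-posed: Theorem~1 already gives concavity of $\zeta$, hence $-\zeta$ is convex; the logarithmic barrier $\phi_{\text{ie}}$ is convex as a sum of negated concave log-terms (each argument of the logarithm being concave in ${\boldsymbol{y}}_k$, which follows from the concavity arguments already used in the proof of Theorem~1 together with the perspective-function construction); and $\phi_{\text{e}}$ is convex as the square of an affine function. Thus $\psi$ is convex on its effective domain.

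Second, I would exploit the domain properties of the barrier. Because $-\varphi_1-\varphi_2-\varphi_3$ diverges to $+\infty$ as ${\boldsymbol{y}}_k$ approaches the boundary of the set defined by $\{S_{k,1}^i\}\cap S_{k,2}\cap S_{k,3}$, any finite minimizer of $\psi$ must lie strictly in the interior of those inequality constraints. This is the standard interior-point observation and immediately guarantees that the barrier-induced solution is feasible for all inequality constraints of (\ref{Convex problem}).

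Third, and at the heart of the argument, I would invoke the standard convergence theorem for the mixed penalty scheme as stated in \cite{Wright}. Let ${\boldsymbol{y}}_k^*(\mu_{\text{ie}},\mu_{\text{e}})$ denote the minimizer of $\psi$ for given penalty parameters. As $\mu_{\text{ie}}\to 0^+$ the log-barrier contribution vanishes on the strict interior, so the limit remains feasible for $\{S_{k,1}^i\}$, $S_{k,2}$ and $S_{k,3}$. As $\mu_{\text{e}}\to +\infty$, any accumulation point must force $\phi_{\text{e}}({\boldsymbol{y}}_k)=0$, which recovers the equality constraint $S_{k,4}$ exactly. Convexity of both the objective and the feasible region (Theorem~1) then implies that the limit attains the maximum of $\zeta$ over $\{S_{k,1}^i\}\cap S_{k,2}\cap S_{k,3}\cap S_{k,4}$, which is precisely the optimum of (\ref{Convex problem}). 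Equivalence follows because the two problems share a common optimizer and optimal value.

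The main obstacle will be justifying the joint limit and verifying Slater's condition, namely the existence of some ${\boldsymbol{y}}_k$ satisfying the three inequality constraints strictly and the equality constraint exactly. I would address this by exhibiting a candidate power allocation that meets the rate floor $R_t$ with slack while consuming strictly less than $P_t$, and then rescaling it through the perspective transformation to obtain a strictly interior ${\boldsymbol{y}}_k$ on the equality hyperplane; mild compatibility of the problem parameters $P_t$, $R_t$, $P_k^c$ suffices. Once strict feasibility is in hand, the penalty/barrier convergence theorem applies verbatim and closes the proof.
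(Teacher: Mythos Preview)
Your proposal is correct and follows essentially the same route as the paper: both arguments reduce the equivalence to standard penalty/barrier convergence results for convex programs. The paper's proof is simply a terser two-step version---first eliminating the inequality constraints via the barrier method (citing \cite{Boyd} and \cite{Wright}) to obtain an intermediate problem with only the equality constraint $S_{k,4}$, and then eliminating that equality via the exterior/augmented-Lagrangian penalty (citing \cite{bloom2014exterior})---whereas you handle both penalties in a single joint limit $\mu_{\text{ie}}\to 0^+$, $\mu_{\text{e}}\to+\infty$; the underlying machinery and level of rigor are the same, and your explicit discussion of Slater's condition is a welcome addition the paper leaves implicit.
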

  \begin{proof}
  From \emph{Theorem 1}, we know that the optimization problem in (\ref{Convex problem})  is a convex optimization.
  According to the barrier method in \cite{Boyd} and  \emph{Theorem 4} in \cite{Wright}, the optimization problem in (\ref{Convex problem}) can be transformed into an equivalent convex optimization problem as follows,
  \begin{equation}\begin{split}\label{Equation Convex problem}
\mathop {\max }\limits_{{\boldsymbol{y}}_k} &\text{ }
            {\psi '}({\boldsymbol{y}}_k) =  - \zeta ({\boldsymbol{y}}_k)+ {{\mu}_{\text{ie}}} {\phi _{\text{ie}}}({\boldsymbol{y}}_k){\rm{ }} \\
\text{s.t.    }
            & P_k^c{y_k^0}{\rm{ + }}\sum\limits_{i = 1}^{N} {y_k^i}  - 1 = 0.
\end{split}\end{equation}
Then, by introducing the augmented Lagrangian terms in \cite{bloom2014exterior}, the convex optimization problem in (\ref{Equation Convex problem}) can be transformed into an equivalent unconstrained convex optimization problem in (\ref{Unconstrained convex problem}).
Therefore, the non-constrained convex optimization problem in (\ref{Unconstrained convex problem})  is equivalent to the constrained convex optimization problem in (\ref{Convex problem}).
  \end{proof}
}

Define
\begin{equation*}
{\boldsymbol{y}}_k^*(n) = \left[ y_k^{0,*}(n),y_k^{1,*}(n), \cdots, y_k^{i,*}(n), \cdots, y_k^{N,*}(n) \right]^T,
\end{equation*}
and let ${\boldsymbol{y}}_k^*(n)$ denote the solution of the above unconstrained optimization problem
for the $n$th iteration in the non-cooperative power control game,
which can readily be obtained by using the gradient method.
{Then, the optimal transmit power can be calculated as follows,}
\begin{equation}\begin{split}\label{}
P_k^{i}(n) = \frac{{y_k^{i,*}}(n)} {{y_k^{0,*}}(n)}, \ i=1,2,...,N.
\end{split}\end{equation}

{It should be pointed out here that the constraints have been removed by introducing only two parameters in our proposed approach.
Correspondingly, the computational complexity of the problem solving can be reduced greatly.}

\subsection{The {Existence} and Uniqueness of the NE}

NE offers a predictable and stable outcome about the transmit power strategy that each SBS will choose.
For our considered non-cooperative power control game, we have the following theorem.
 %{
  \begin{theorem}
  There exists one and only one NE in the non-cooperative power control game \(G = [\mathcal{K} ,\{ {{\boldsymbol{p}}_k}\} ,\{ {u_k}( \cdot )\} ]\).
  \end{theorem}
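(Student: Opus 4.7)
The plan is to split the argument into an existence part, handled by a Debreu--Glicksberg--Fan style fixed-point theorem, and a uniqueness part, for which I would invoke Rosen's diagonally strictly concave condition (with a contraction/standard-interference fallback if needed).

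For existence, I would verify three hypotheses on $G = [\mathcal{K},\{\boldsymbol{p}_k\},\{u_k(\cdot)\}]$. First, each $C_k = C_{k,1}\cap C_{k,2}\cap C_{k,3}$ is nonempty (standing feasibility), closed and bounded (hence compact), because $C_{k,2}$ bounds $\boldsymbol{p}_k$ above by $P_t$ while $C_{k,1}$ bounds it below by $0$, and convex because, as already shown in the proof of \emph{Theorem 1}, each constraint function $l_{k,j}$ is convex so its sublevel set is convex. Second, $u_k(\boldsymbol{p}_k,\boldsymbol{p}_{-k}) = \text{EE}_k$ is jointly continuous on $\prod_k C_k$, being a ratio of a continuous (log-based) numerator and a strictly positive affine denominator (bounded below by $P_k^c > 0$). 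Third, for fixed $\boldsymbol{p}_{-k}$, the map $\boldsymbol{p}_k \mapsto u_k(\boldsymbol{p}_k,\boldsymbol{p}_{-k})$ is strictly quasi-concave: the numerator $R_k$ is strictly concave in $\boldsymbol{p}_k$ (its Hessian is the negative diagonal matrix derived in the proof of \emph{Theorem 1}), and a concave-over-positive-affine ratio is strictly quasi-concave. Equivalently, the perspective transformation of \emph{Theorem 1} renders the best-response problem a strictly convex program in $\boldsymbol{y}_k$. By Debreu--Glicksberg--Fan, $G$ admits at least one pure-strategy NE.

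For uniqueness, strict quasi-concavity only guarantees that the best-response correspondence $BR_k(\boldsymbol{p}_{-k})$ is single-valued; it does not by itself rule out multiple fixed points of the composite best response. I would upgrade to uniqueness via Rosen's framework: define the weighted pseudo-gradient $g(\boldsymbol{p}) = \bigl(r_1\nabla_{\boldsymbol{p}_1} u_1,\ldots,r_K\nabla_{\boldsymbol{p}_K} u_K\bigr)$ for positive weights $r_k$, and establish the diagonally strictly concave inequality $(\boldsymbol{p}-\boldsymbol{p}')^T\bigl(g(\boldsymbol{p}') - g(\boldsymbol{p})\bigr) > 0$ for all $\boldsymbol{p}\neq\boldsymbol{p}'$ in $\prod_k C_k$. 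Since each $u_k$ depends on $\boldsymbol{p}_{-k}$ only through the interference terms appearing inside the logarithms of $R_k$, the Jacobian $G(\boldsymbol{p})$ of $g$ has explicit strictly negative diagonal blocks (from the strict concavity of $f_k$ in $\boldsymbol{p}_k$) and off-diagonal blocks scaled by the cross gains $H_{l,k}^i$. A sufficient condition is then that $G(\boldsymbol{p}) + G(\boldsymbol{p})^T$ is negative definite, which I would verify by a Gershgorin-type diagonal-dominance argument with the weights $r_k$ tuned to offset the interference coupling.

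The main obstacle will be the diagonal-strict-concavity step, specifically the bookkeeping needed to dominate the off-diagonal interference-induced blocks of the weighted Jacobian by its diagonal. If a clean matrix inequality turns out to be delicate, my fallback is to reformulate the best-response update as a standard interference function in Yates's sense (verifying positivity, monotonicity, and scalability); this would yield uniqueness of the fixed point and, as an added benefit, justify convergence of the iterative procedure in Algorithm~1.
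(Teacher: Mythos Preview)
Your plan differs from the paper's proof in both route and level of rigor. The paper works entirely in the transformed variables: it treats the mixed-penalty objective $\psi(\boldsymbol{y}_k)$ from (\ref{Unconstrained function}) as the payoff, notes that $\psi$ is continuous and convex in $\boldsymbol{y}_k$ to assert existence, and then shows that the log-barrier term $\varphi_1(\boldsymbol{y}_k)=\sum_i \ln y_k^i$ has a strictly negative definite Hessian, so $\psi$ is \emph{strictly} convex and each player's best-response problem has a unique optimum; from this alone it concludes the NE is unique. In other words, the paper's uniqueness step is exactly the inference you explicitly flag as insufficient---single-valued best responses do not by themselves rule out multiple fixed points of the joint best-response map---and the paper does not supply a Rosen/DSC or contraction argument to close that gap.

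Your proposal, by contrast, stays with the original fractional utilities, obtains existence via Debreu--Glicksberg--Fan on the compact convex sets $C_k$ (a cleaner setting than the unbounded $\boldsymbol{y}_k$-domain the paper implicitly uses), and then aims for uniqueness through Rosen's diagonally-strict-concavity condition or a Yates standard-interference-function argument. This is the more principled route and, if the DSC or scalability verification goes through, yields both uniqueness and convergence of Algorithm~1 in one stroke. The trade-off is that the paper's argument is a two-line appeal to strict convexity of a single scalar function, whereas yours requires controlling the off-diagonal interference blocks of the pseudo-gradient Jacobian---work you correctly identify as the main obstacle and which will likely need an explicit assumption on the cross-gain ratios $H_{l,k}^i/H_{k,k}^i$ to close.
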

  \begin{proof}
  It has been established that the optimization problems in (\ref{Nonconvex problem}), (\ref{Convex problem}) and  (\ref{Unconstrained convex problem}) are equivalent.
 Accordingly, we can treat \({\psi }({\boldsymbol{y}}_k)\)  as the payoff function of the $k$th SBS in the non-cooperative power control game \(G = [\mathcal{K} ,\{ {{\boldsymbol{p}}_k}\} ,\{ {u_k}( \cdot )\} ]\).
 From the above description, the payoff function \({\psi }({\boldsymbol{y}}_k)\) is continuous with respect to $y_k^i$, and  \({\psi }({\boldsymbol{y}}_k)\) is convex.
 Therefore, NE exists in the considered non-cooperative power control game.

%Let \(\varphi_k({\boldsymbol{y}}_k)= \sum\limits_{i = 0}^N {\ln y_{\rm{k}}^i}\),
{
According to the definition of $\varphi_1({\boldsymbol{y}}_k)$,}
we can readily {establish} that
\begin{align*}
\frac{{{\partial ^2}{\varphi_1({\boldsymbol{y}}_k)}}}{{{{\left( {\partial y_{k}^i} \right)}^2}}} &=  - \frac{1}{{{{\left( {y_{k}^i} \right)}^2}}}<0, \\
\frac{{{\partial ^2}{\varphi_1({\boldsymbol{y}}_k)}}}{{\partial y_{k}^i\partial y_{k}^j}} &= 0, \ \forall i \ne j.
\end{align*}
{
Therefore, the Hessian matrix of \(\varphi_1({\boldsymbol{y}}_k)\) is negative definite
and \(\varphi_1({\boldsymbol{y}}_k)\) is strictly concave.
It is already known that $\varphi_2({\boldsymbol{y}}_k)$ and $\varphi_3({\boldsymbol{y}}_k)$ are concave, and \({\phi _{\text{e}}}({\boldsymbol{y}}_k)\) are convex.}
Correspondingly, \({\phi _{\text{ie}}}({\boldsymbol{y}}_k)\) is strictly convex,
\({\psi}({\boldsymbol{y}}_k)\) is strictly convex,
and the optimization problem in (\ref{Unconstrained convex problem}) has a unique optimal solution.
Therefore, a unique NE exists  in the proposed non-cooperative power control game.
\end{proof}

{
It should be pointed out here that NE is a combination of strategies such that each participant's strategy at the same time is the optimal response to the other participants' strategies,
and that NE, even when it is unique,  does not mean that all the participants of the game have achieved their global optimum.
One of the metrics used to measure how the efficiency of a system degrades due to the selfish behavior of its participants in economics and game theory is the so-called price of anarchy \cite{Nisan, Koutsoupias},
which is defined as the ratio between the worst NE point and the social optima.
According to \cite{Roughgarden}, research over the past seventeen years has provided an encouraging counterpoint to this widespread equilibrium inefficiency: in a number of interesting application domains, game-theoretic equilibria provably approximate the optimal outcome.
That is, the price of anarchy is close to 1 in many interesting games.
}

\section{Simulation Results}

In this section, the performance of the proposed non-cooperative power control scheme is evaluated via simulations.
In the simulation, we consider the same number of SUEs in each small cell with a path loss function
\( H= \kappa d^{-\chi}\) \cite{Jiang}, where \(d\) denotes the distance between the BS and the UE, $\kappa=10^{-1}$ and $\chi=4$  the path loss constant and  path loss exponent, respectively.
In the simulations,  the radiuses of the macrocell and small cells are set to be 1000 meters and 100 meters, respectively.
The noise spectral density is set to be -174dBm/Hz.

\begin{figure}[!b]
\centering %\vspace*{135pt}
\includegraphics[width=0.45\textwidth]{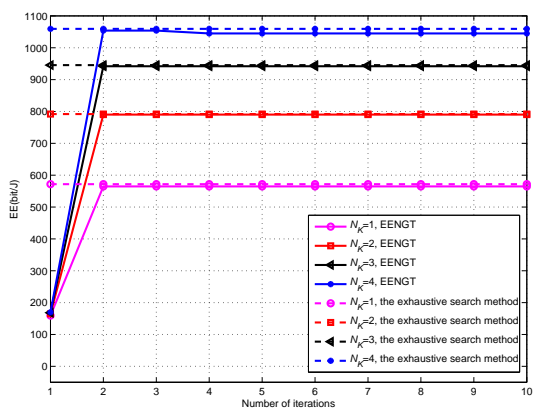}
\captionstyle{mystyle2}
\caption{EE versus the  number of iterations with different $N_K$.}
\label{fig2}
\end{figure}
\begin{figure}[!b]
\centering %\vspace*{135pt}
\includegraphics[width=0.45\textwidth]{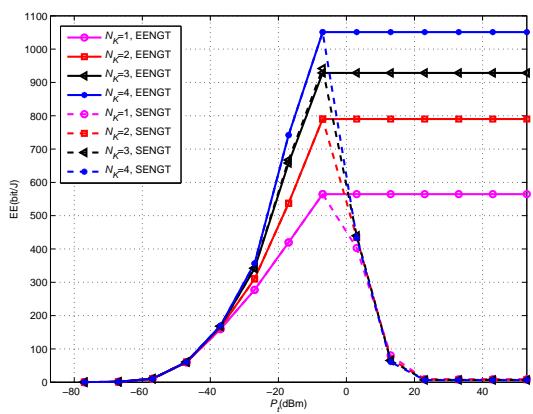}
\captionstyle{mystyle2}
\caption{EE versus the maximum transmit power \(P_t\) with different $N_K$.}
\label{fig3}
\end{figure}

In Fig. \ref{fig2}, we illustrate the EE of the proposed scheme {optimizing the EE based on non-cooperative game theory (hereafter referred to as EENGT) } versus the number of iterations with different \(N_K\) for \(K=2\), {$R_t=3$ bit/s} and \(P_t=20\) dBm.
Also illustrated in the figure as a performance benchmark is the EE of the exhaustive search method,
using which the EE of the system in (\ref{EEs}) is  maximized.
We can observe that the larger the number of SUEs, the larger the EE of the proposed scheme.
The reason is that different SUEs in the considered same cell do not crosstalk each other,
and the SUEs can be allocated the suitable power levels.
Then, multiuser diversity can be exploited.
We can also observe that only several iterations are required for the proposed scheme to converge to the NE.
%The EE of the system in (\ref{EEs}) is optimized by using the exhaustive search method.
It can be observed that the EE of the proposed scheme approaches the EE of the exhaustive search method for the considered simulation scenario.
%The reason is that each SBS is allowed to optimize its own EE by using the proposed scheme.
%Thus, the proposed distributed scheme is realized at the cost of very low performance penalty.
However, the complexity can be decreased dramatically by using the proposed scheme.
Note here that the complexity of the proposed scheme is $\mathcal{O}(KN^2)$, whereas the exhaustive search method has an exponential complexity with respect to the number of RBs and the number of cells.

%The performance comparison between the proposed algorithm and the exhaustive search method is also shown.
%We can see that  the proposed algorithm converges within several iterations for all the considered \(N_K\),
%and the EEs achieved by the proposed algorithm are very close to the EEs achieved by the exhaustive search method.
%It can also be observed from the figure that the EE of all the considered schemes increases with $N_K$.
%The reason is that different SUEs in the same small cell does not crosstalk each other and the SUEs can be allocated the power that are more suitable for them.
%Then, multiuser diversity can be exploited.

%{Fig. \ref{fig3} illustrates the evolution of the proposed algorithm with \(N=8\) numbers of the users (NU), \(P_t=20\) dBm and $R_t=4$ bit/s at the SBS for different numbers of cells. It can be observed that the algorithm converges within three iterations for all considered \(K\).}
%{
% It can also be observed that the EE decreases with the increasing of the number of cells.
% The reason is that the EE in (\ref{EE}) is a monotonic decreasing function respect to the interference among the users.
% In addition, the interference among the users increases with the number of cells.
% Therefore, the EE decreases with the increasing of the number of cells.
%}
%
%\begin{figure}[!b]
%\centering %\vspace*{135pt}
%\includegraphics[width=0.5\textwidth]{{figures/fzjp5}}
%\captionstyle{mystyle3}
%\caption{EE versus number of iterations with \(N_K=8\)  for different \(N_c\).}
%\label{fig3}
%\end{figure}

In Fig. \ref{fig3}, we show the EE versus the maximum transmit power \(P_t\) with different \(N_K\)  for \(K=2\) { and $R_t=3$ bit/s}.
%Also included in the figure  is the EE of the {existing power control scheme} in \cite{Huang}.
Also included in the figure is the EE of the power control scheme optimizing SE based on non-cooperative game theory (hereafter referred to as SENGT) in \cite{Huang}.
For the {SENGT}, SE instead of EE is maximized.
It can be observed that the EE of the  EENGT increases with $P_t$  when $P_t$ is smaller than a certain threshold value,
and that the EE almost remains  constant  when $P_t$ is large enough.
%The power control policies for the baseline scheme can be obtained by using a similar approach as in \cite{Huang}.
We can readily observe that the  EENGT, which optimizes EE,
provides an obvious performance improvement in terms of EE over the {SENGT} in \cite{Huang}, which optimizes SE.
The reason is that the latter scheme uses excess power to increase the SE by sacrificing the EE, especially in the high transmit power region.

In Fig. \ref{fig4}, we show the SE versus the maximum transmit power \(P_t\) with different \(N_K\) for \(K=2\) and { $R_t=3$ bit/s}.
We compare the system performance of the  EENGT again with the {SENGT} in \cite{Huang}.
It can be  obviously observed that the SE of the  EENGT increases with the maximum transmit power in the low transmit power region,
and that the SE remains constant in the high transmit power region.
The reason is that the  EENGT clips the transmit power at the SBSs to maximize the system EE.
It can also be observed that the {SENGT} achieves a higher SE than the  EENGT.
The reason is that the former scheme consumes all the available transmit power in all the considered transmit power region.
However, the SE of the baseline scheme comes at the expense of lower EE.

{We can observe from Fig. \ref{fig3} and Fig. \ref{fig4} that both the EE and SE of our proposed scheme increase with the maximum transmit power  in the low transmit power region,
and  they almost remain  constant in the high transmit power region.
This gives us the insight that only appropriate, rather than   exorbitant, transmit power needs to be allocated  for each SBS to achieve its maximum utility and reach the NE point.
}
%\begin{figure}[!ht]
%\centering %\vspace*{135pt}
%\includegraphics[width=0.5\textwidth]{{figures/fzjg3}}
%\captionstyle{mystyle3}
%\caption{EE versus the maximum transmit power \(P_t\).}
%\label{fig3}
%\end{figure}

\begin{figure}[!b]
\centering %\vspace*{135pt}
\includegraphics[width=0.45\textwidth]{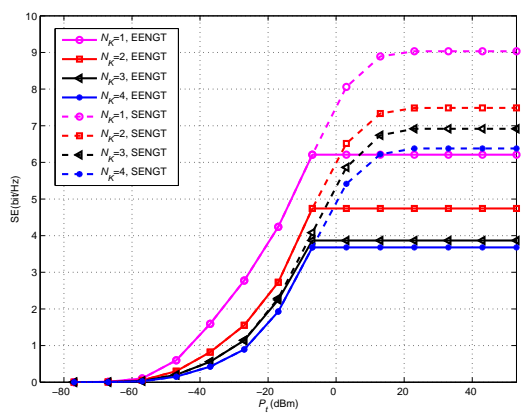}
\captionstyle{mystyle2}
\caption{SE versus the maximum transmit power \(P_t\) with different $N_K$.}
\label{fig4}
\end{figure}

\section{Conclusions}

In this paper, we have formulated a non-cooperative power control game for small cells underlaying a macro cellular network.
By exploiting the properties of parameter-free fractional programming, the concept of perspective function, and the mixed penalty function,
an energy efficient power control scheme has been proposed to maximize the EE.
Simulation results have shown that significant improvements in terms of EE are achieved by using the proposed scheme.

%\section*{Acknowledgments}
%
%    This work was supported in part by the National 863 Project (2015AA01A709), the National Basic Research Program of China (973 Program 2012CB316004), the Natural Science Foundation of China (61221002, 61521061),
%    and the UK Engineering and Physical Sciences Research Council (EPSRC) under Grant EP/K040685/1.

%\bibliographystyle{IEEEtran}
%\bibliography{manuscript_letter}

\begin{biography}[{\includegraphics[width=1in,height
=1.25in,clip,keepaspectratio]{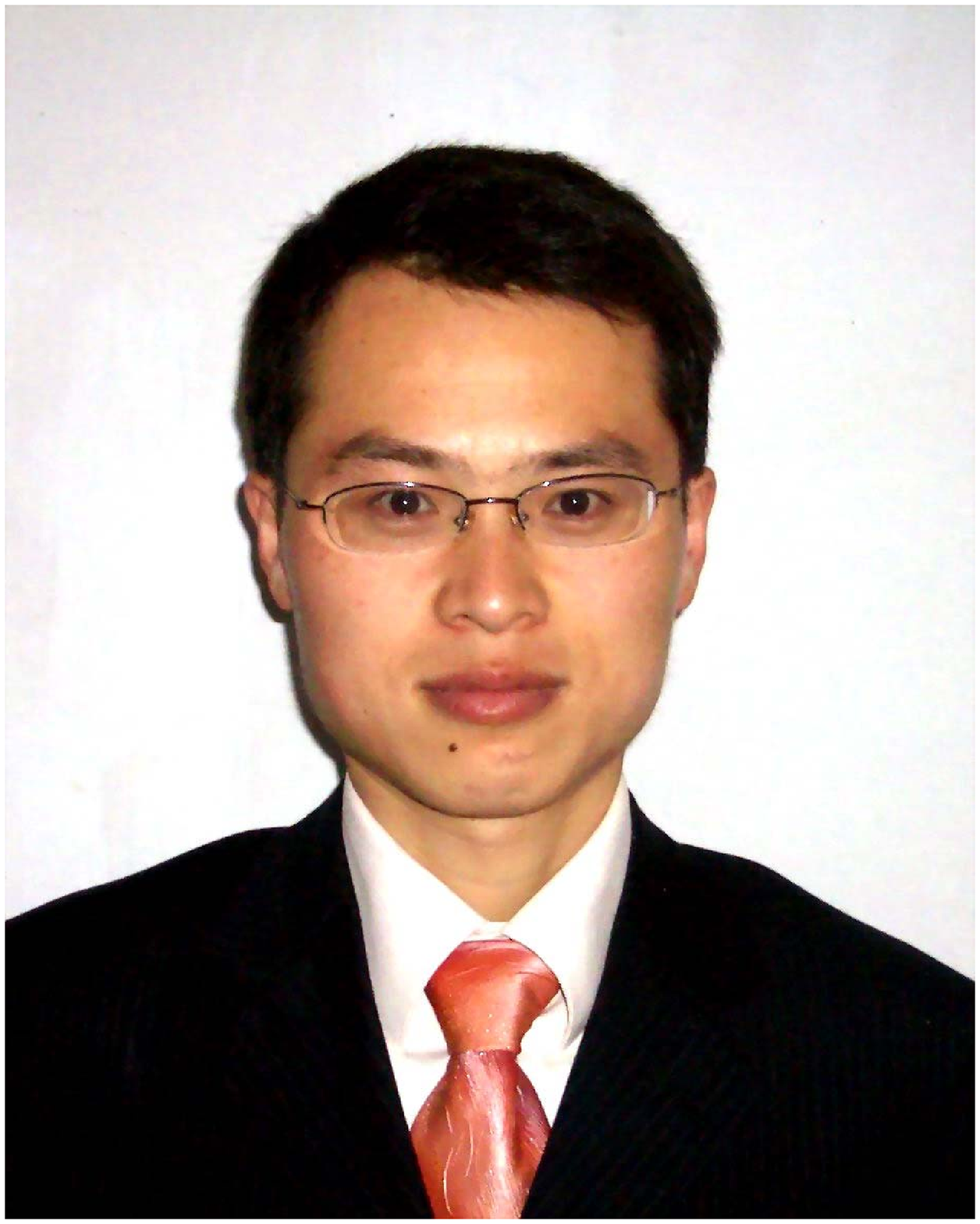}}]
{Yanxiang Jiang (S'03-M'07)}
received the B.S. degree
in electrical engineering from Nanjing University,
Nanjing, China, in 1999 and the M.S. and
Ph.D. degrees in communications and information
systems from Southeast University, Nanjing, China,
in 2003 and 2007, respectively.

From December 2013 to December 2014, he was
a Visiting Scholar with the Signals and Information
Group, Department of Electrical and Computer Engineering,
University of Maryland at College Park,
College Park, MD, USA, with Prof. K. J. Ray Liu. He
is currently with the faculty of the National Mobile Communications Research
Laboratory, Southeast University, Nanjing, China. His current research interests include
signal processing, and wireless communications.

Dr. Jiang received the Distinguished Graduated Student Award from
Nanjing University, Nanjing, China, and the Outstanding Distinguished Doctoral
Dissertation from Southeast University.
\end{biography}

%\vspace*{-2\baselineskip}

\begin{biography}[{\includegraphics[width=1in,height
=1.25in,clip,keepaspectratio]{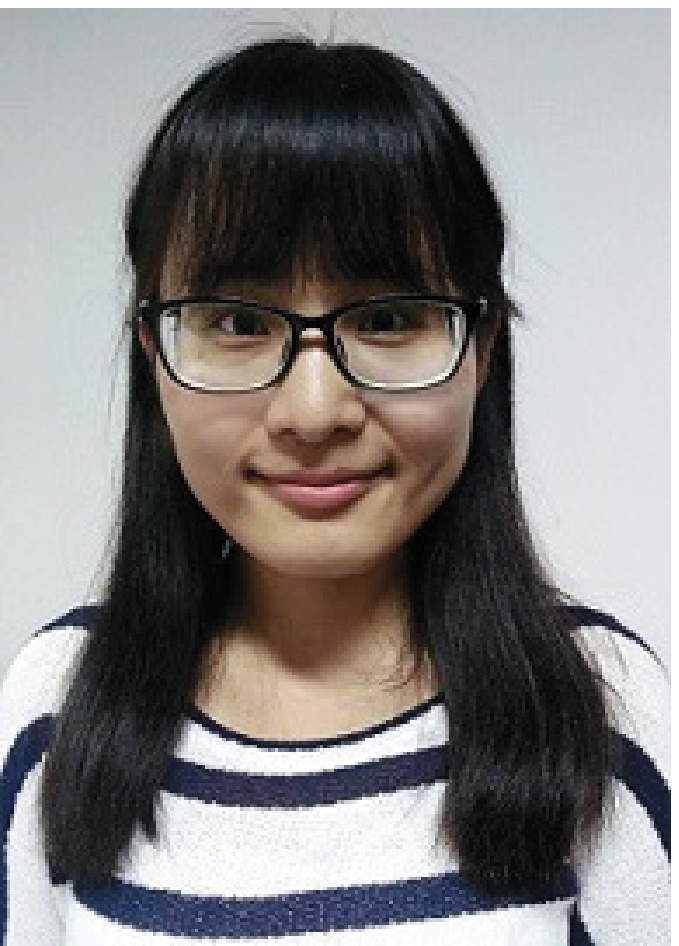}}]
{Ningning Lu}
is currently pursuing the M.S. degree in communications and information systems from Southeast University, Nanjing, China.
Her research interests include radio resource management, and mobile communication systems.
\end{biography}

%\vspace*{-2\baselineskip}

\begin{biography}[{\includegraphics[width=1in,height
=1.25in,clip,keepaspectratio]{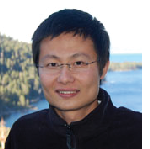}}]
{Yan Chen (SM'14)}
received the Bachelor's degree
from University of Science and Technology of China
in 2004, the M.Phil. degree from Hong Kong University
of Science and Technology (HKUST) in 2007,
and the Ph.D. degree from University of Maryland
College Park in 2011. His current research interests
are in data science, network science, game theory, social
learning and networking, as well as signal processing
and wireless communications.

Dr. Chen is the recipient of multiple honors
and awards including best paper award from IEEE
GLOBECOM in 2013, Future Faculty Fellowship and Distinguished Dissertation
Fellowship Honorable Mention from Department of Electrical and
Computer Engineering in 2010 and 2011, respectively, Finalist of Deans
Doctoral Research Award from A. James Clark School of Engineering at
the University of Maryland in 2011, and Chinese Government Award for
outstanding students abroad in 2011.

\end{biography}

%\vspace*{-2\baselineskip}

\begin{biography}[{\includegraphics[width=1in,height
=1.25in,clip,keepaspectratio]{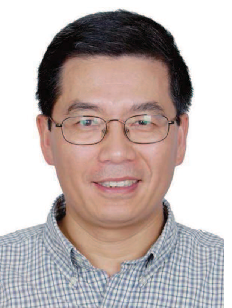}}]
{Fuchun Zheng (M'95-SM'99)}
obtained the BEng (1985) and MEng (1988) degrees in radio engineering from Harbin Institute of Technology, China,
and the PhD degree in Electrical Engineering from the University of Edinburgh, UK, in 1992.

From 1992 to 1995, he was a post-doctoral research associate with the University of Bradford, UK, Between May 1995 and August 2007, he was with Victoria University, Melbourne, Australia, first as a lecturer and then as an associate professor in mobile communications.  He was with the University of Reading, UK, from September 2007 to July 2016 as a Professor (Chair) of Signal Processing. He has also been a ``1000 talents'' adjunct professor with Southeast University, China, since 2010. Since August 2016, he has been with Harbin Institute of Technology (Shenzhen), China and the University of York, UK as a specially appointed professor. He has been awarded two UK EPSRC Visiting Fellowships - both hosted by the University of York (UK): first from August 2002 to July 2003 and then from August 2006 to July 2007. Over the past 20 years, Dr Zheng has also carried out many government and industry sponsored research projects - in Australia, the UK, and China. He has been both a short term visiting fellow and a long term visiting research fellow with British Telecom, UK. Dr Zheng's current research interests include signal processing for communications, multiple antenna systems, and green communications.

He has been an active IEEE member since 1995. He was an editor (2001-2004) of IEEE TRANSACTIONS ON WIRELESS COMMUNICATIONS. In 2006, Dr Zheng served as the general chair of IEEE VTC 2006-S in Melbourne, Australia (www.ieeevtc.org/vtc2006spring) - the first ever VTC held in the southern hemisphere in VTC’s history of six decades. He was the executive TPC Chair for VTC 2016-S in Nanjing, China (the first ever VTC held in mainland China: www.ieeevtc.org/vtc2016spring).

\end{biography}

%\vspace*{-2\baselineskip}

\begin{biography}[{\includegraphics[width=1in,height
=1.25in,clip,keepaspectratio]{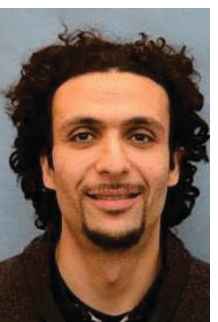}}]
{Mehdi Bennis (S'07-AM'08-SM'15)}
received the M.Sc. degree in electrical engineering jointly from
the EPFL, Lausanne, Switzerland and the Eurecom
Institute, Biot, France, in 2002, and the Ph.D. degree
in spectrum sharing for future mobile cellular systems,
in 2009. From 2002 to 2004, he worked as
a Research Engineer with IMRA-EUROPE investigating
adaptive equalization algorithms for mobile
digital TV. In 2004, he joined the Centre for Wireless
Communications (CWC), University of Oulu, Oulu,
Finland, as a Research Scientist. In 2008, he was a
Visiting Researcher at the Alcatel-Lucent Chair on flexible radio, SUPELEC.
He has coauthored one book and authored more than 100 research papers
in international conferences, journals, and book chapters. His research interests
include radio resource management, heterogeneous networks, game theory
and machine learning in 5G networks, and beyond. He serves as an Editor
for the IEEE TRANSACTIONS ON WIRELESS COMMUNICATIONS. He was
the recipient of the prestigious 2015 Fred W. Ellersick Prize from the IEEE
Communications Society.
\end{biography}

%\vspace*{-2\baselineskip}

\begin{biography}[{\includegraphics[width=1in,height
=1.25in,clip,keepaspectratio]{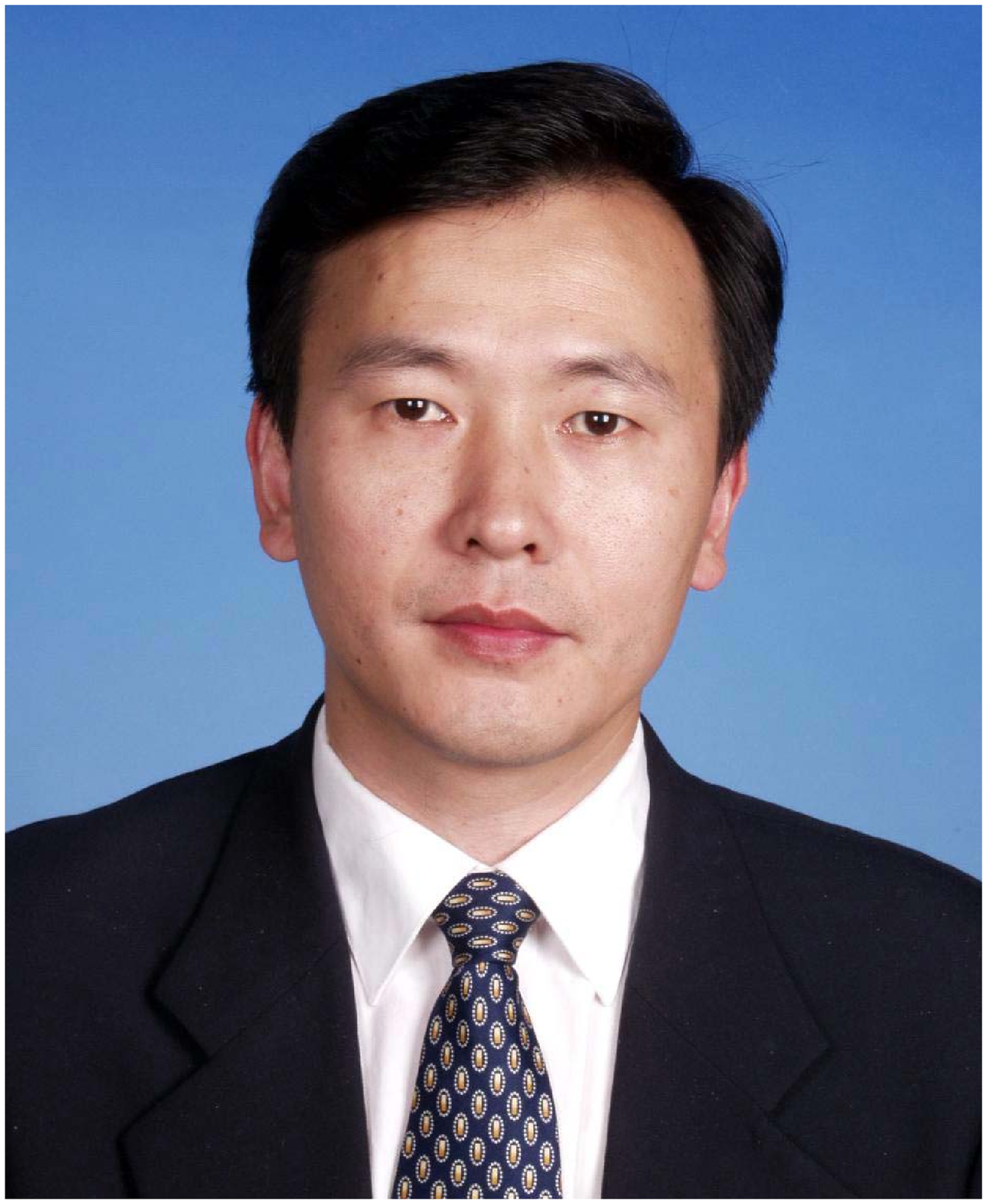}}]
{Xiqi Gao (SM'07-F'15)} received the Ph.D. degree
in electrical engineering from Southeast University,
Nanjing, China, in 1997. In April 1992, he joined the
Department of Radio Engineering, Southeast University,
where he has been a Professor of information
systems and communications since May 2001. From
September 1999 to August 2000, he was a Visiting
Scholar at Massachusetts Institute of Technology,
Cambridge, MA, USA, and Boston University,
Boston, MA. From August 2007 to July 2008, he visited
Darmstadt University of Technology, Darmstadt,
Germany, as a Humboldt Scholar. His current research interests include broadband
multicarrier communications, MIMO wireless communications, channel
estimation and turbo equalization, and multirate signal processing for wireless
communications.

Dr. Gao served as an Editor of the IEEE TRANSACTIONS ON WIRELESS
COMMUNICATIONS from 2007 to 2012. From 2009 to 2013, he served as an
Editor of the IEEE TRANSACTIONS ON SIGNAL PROCESSING. He is currently
serving as an Editor of the IEEE TRANSACTIONS ON COMMUNICATIONS. He
received the Science and Technology Awards of the State Education Ministry
of China in 1998, 2006, and 2009; the National Technological Invention
Award of China in 2011; and the 2011 IEEE Communications Society Stephen
O. Rice Prize Paper Award in the field of communications theory.

\end{biography}

%\vspace*{-2\baselineskip}

\begin{biography}[{\includegraphics[width=1in,height
=1.25in,clip,keepaspectratio]{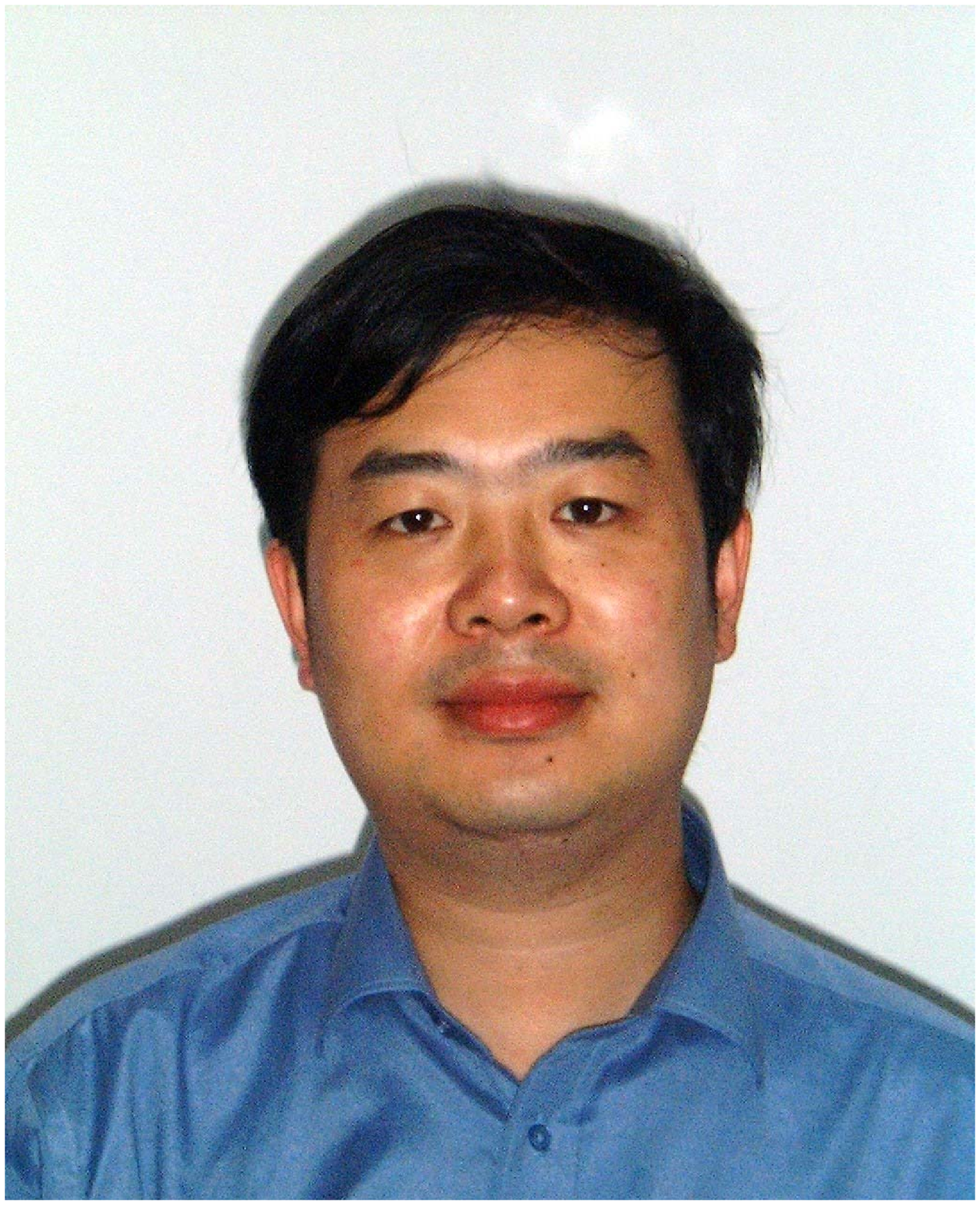}}]
{Xiaohu You (SM'11-F'12)} received the B.S., M.S., and Ph.D. degrees in electrical engineering
from Southeast University, Nanjing, China, in 1982, 1985, and 1988,
respectively. Since 1990, he has been working with National Mobile Communications
Research Laboratory at Southeast University, where he holds the
ranks of professor and director. He is the
Chief of the Technical Group of China 3G/B3G
Mobile Communication R$\&$D Project. His research
interests include mobile communications, adaptive
signal processing, and artificial neural networks, with applications to communications
and biomedical engineering.

Dr. You was a recipient of the Excellent Paper Prize from the China Institute
of Communications in 1987; the Elite Outstanding Young Teacher award from
Southeast University in 1990, 1991, and 1993; and the National Technological
Invention Award of China in 2011. He was also a recipient of the 1989 Young
Teacher Award of Fok Ying Tung Education Foundation, State Education
Commission of China.
\end{biography}

%\newpage
\end{document}